\documentclass[aps,prl,twocolumn,groupedaddress,amsmath,superscriptaddress,english]{revtex4}
\usepackage{amsfonts}
\usepackage{amssymb}
\usepackage{amsmath}
\usepackage[dvips]{graphicx}
\usepackage{xcolor}
\usepackage{dcolumn}
\usepackage{bm}
\usepackage{soul}
\usepackage{dsfont}

\definecolor{colorone}{rgb}{1,0.36,0.03}
\definecolor{colortwo}{rgb}{0.54,0.71,0.03}
\definecolor{colorthree}{rgb}{0.01,0.51,0.93}
\definecolor{colorfour}{rgb}{0.47,0.26,0.58}

\def\b{\beta}

\makeatletter
\newcommand{\rmnum}[1]{\romannumeral #1}
\newcommand{\Rmnum}[1]{\expandafter\@slowromancap\romannumeral #1@}
\makeatother

\usepackage{hyperref}
\hypersetup{colorlinks=true,citecolor=blue,linkcolor=blue,filecolor=blue,urlcolor=blue,breaklinks=true,linktocpage=true}

\usepackage{enumitem}
\usepackage{amsmath}
\usepackage{graphicx}
\usepackage{amsfonts}
\usepackage{amssymb}
\usepackage{hyperref}
\usepackage{color}
\usepackage{mathrsfs}
\usepackage{isomath}
\usepackage{amsthm}
\usepackage{epstopdf}
\usepackage{txfonts}
\usepackage{float}

\newcounter{remark}

\usepackage{tikz}
\usetikzlibrary{patterns}
\makeatletter
\def\grd@save@target#1{%
  \def\grd@target{#1}}
\def\grd@save@start#1{%
  \def\grd@start{#1}}
\tikzset{
  grid with coordinates/.style={
    to path={%
      \pgfextra{%
        \edef\grd@@target{(\tikztotarget)}%
        \tikz@scan@one@point\grd@save@target\grd@@target\relax
        \edef\grd@@start{(\tikztostart)}%
        \tikz@scan@one@point\grd@save@start\grd@@start\relax
        \draw[minor help lines,magenta] (\tikztostart) grid (\tikztotarget);
        \draw[major help lines] (\tikztostart) grid (\tikztotarget);
        \grd@start
        \pgfmathsetmacro{\grd@xa}{\the\pgf@x/1cm}
        \pgfmathsetmacro{\grd@ya}{\the\pgf@y/1cm}
        \grd@target
        \pgfmathsetmacro{\grd@xb}{\the\pgf@x/1cm}
        \pgfmathsetmacro{\grd@yb}{\the\pgf@y/1cm}
        \pgfmathsetmacro{\grd@xc}{\grd@xa + \pgfkeysvalueof{/tikz/grid with coordinates/major step}}
        \pgfmathsetmacro{\grd@yc}{\grd@ya + \pgfkeysvalueof{/tikz/grid with coordinates/major step}}
        \foreach \x in {\grd@xa,\grd@xc,...,\grd@xb}
        \node[anchor=north] at (\x,\grd@ya) {\pgfmathprintnumber{\x}};
        \foreach \y in {\grd@ya,\grd@yc,...,\grd@yb}
        \node[anchor=east] at (\grd@xa,\y) {\pgfmathprintnumber{\y}};
      }
    }
  },
  minor help lines/.style={
    help lines,
    step=\pgfkeysvalueof{/tikz/grid with coordinates/minor step}
  },
  major help lines/.style={
    help lines,
    line width=\pgfkeysvalueof{/tikz/grid with coordinates/major line width},
    step=\pgfkeysvalueof{/tikz/grid with coordinates/major step}
  },
  grid with coordinates/.cd,
  minor step/.initial=.2,
  major step/.initial=1,
  major line width/.initial=2pt,
}
\makeatother

\newcommand{\Tr}{\mbox{Tr}}

\def\be{\begin{equation}}
\def\ee{\end{equation}}
\def\ba{\begin{array}}
\def\ea{\end{array}}
\def\Tr{\mathrm{Tr}}

\newcommand{\p}{\mathbf{p}}
\newcommand{\q}{\mathbf{q}}

\newcommand{\x}{\mathbf{x}}
\newcommand{\y}{\mathbf{y}}
\newcommand{\z}{\mathbf{z}}

\usepackage{tikz}
\usetikzlibrary{patterns}
\makeatletter
\def\grd@save@target#1{%
  \def\grd@target{#1}}
\def\grd@save@start#1{%
  \def\grd@start{#1}}
\tikzset{
  grid with coordinates/.style={
    to path={%
      \pgfextra{%
        \edef\grd@@target{(\tikztotarget)}%
        \tikz@scan@one@point\grd@save@target\grd@@target\relax
        \edef\grd@@start{(\tikztostart)}%
        \tikz@scan@one@point\grd@save@start\grd@@start\relax
        \draw[minor help lines,magenta] (\tikztostart) grid (\tikztotarget);
        \draw[major help lines] (\tikztostart) grid (\tikztotarget);
        \grd@start
        \pgfmathsetmacro{\grd@xa}{\the\pgf@x/1cm}
        \pgfmathsetmacro{\grd@ya}{\the\pgf@y/1cm}
        \grd@target
        \pgfmathsetmacro{\grd@xb}{\the\pgf@x/1cm}
        \pgfmathsetmacro{\grd@yb}{\the\pgf@y/1cm}
        \pgfmathsetmacro{\grd@xc}{\grd@xa + \pgfkeysvalueof{/tikz/grid with coordinates/major step}}
        \pgfmathsetmacro{\grd@yc}{\grd@ya + \pgfkeysvalueof{/tikz/grid with coordinates/major step}}
        \foreach \x in {\grd@xa,\grd@xc,...,\grd@xb}
        \node[anchor=north] at (\x,\grd@ya) {\pgfmathprintnumber{\x}};
        \foreach \y in {\grd@ya,\grd@yc,...,\grd@yb}
        \node[anchor=east] at (\grd@xa,\y) {\pgfmathprintnumber{\y}};
      }
    }
  },
  minor help lines/.style={
    help lines,
    step=\pgfkeysvalueof{/tikz/grid with coordinates/minor step}
  },
  major help lines/.style={
    help lines,
    line width=\pgfkeysvalueof{/tikz/grid with coordinates/major line width},
    step=\pgfkeysvalueof{/tikz/grid with coordinates/major step}
  },
  grid with coordinates/.cd,
  minor step/.initial=.2,
  major step/.initial=1,
  major line width/.initial=2pt,
}
\makeatother

\makeatletter
\newcommand*\bigcdot{\mathpalette\bigcdot@{.5}}
\newcommand*\bigcdot@[2]{\mathbin{\vcenter{\hbox{\scalebox{#2}{$\m@th#1\bullet$}}}}}
\makeatother

\usepackage{booktabs}
\setlength\heavyrulewidth{1pt}

\usepackage{adjustbox}

\newtheorem{lem}{Lemma}

\newtheorem{definition}{Definition}

\makeatletter
\newcommand*{\rom}[1]{\expandafter\@slowromancap\romannumeral #1@}
\makeatother

\begin{document}

\title{Strong Majorization Uncertainty Relations: Theory and Experiment}

\author{Yuan Yuan}
\affiliation{CAS Key Laboratory of Quantum Information, University of Science and Technology of China, Hefei, 230026, China}
\affiliation{Department of Physics, East China University of Science and Technology, Shanghai, 200237, China}
\affiliation{Synergetic Innovation Center of Quantum Information and Quantum Physics, University of Science and Technology of China, Hefei, Anhui 230026, China}

\author{Yunlong Xiao}
\email{yunlong.xiao@ucalgary.ca}
\affiliation{Department of Mathematics and Statistics, University of Calgary, Calgary, Alberta T2N 1N4, Canada}
\affiliation{Institute for Quantum Science and Technology, University of Calgary, Calgary, Alberta, T2N 1N4, Canada}

\author{Zhibo Hou}
\affiliation{CAS Key Laboratory of Quantum Information, University of Science and Technology of China, Hefei, 230026, China}
\affiliation{Synergetic Innovation Center of Quantum Information and Quantum Physics, University of Science and Technology of China, Hefei, Anhui 230026, China}

\author{Shao-Ming Fei}
\affiliation{School of Mathematical Sciences, Capital Normal University, Beijing 100048, China}
\affiliation{Max Planck Institute for Mathematics in the Sciences, 04103 Leipzig, Germany}

\author{Gilad~Gour}
\affiliation{Department of Mathematics and Statistics, University of Calgary, Calgary, Alberta T2N 1N4, Canada}
\affiliation{Institute for Quantum Science and Technology, University of Calgary, Calgary, Alberta, T2N 1N4, Canada}

\author{Guo-Yong Xiang}
\email{gyxiang@ustc.edu.cn}
\affiliation{CAS Key Laboratory of Quantum Information, University of Science and Technology of China, Hefei, 230026, China}
\affiliation{Synergetic Innovation Center of Quantum Information and Quantum Physics, University of Science and Technology of China, Hefei, Anhui 230026, China}

\author{Chuan-Feng Li}
\affiliation{CAS Key Laboratory of Quantum Information, University of Science and Technology of China, Hefei, 230026, China}
\affiliation{Synergetic Innovation Center of Quantum Information and Quantum Physics, University of Science and Technology of China, Hefei, Anhui 230026, China}

\author{Guang-Can Guo}
\affiliation{CAS Key Laboratory of Quantum Information, University of Science and Technology of China, Hefei, 230026, China}
\affiliation{Synergetic Innovation Center of Quantum Information and Quantum Physics, University of Science and Technology of China, Hefei, Anhui 230026, China}

\begin{abstract}
In spite of enormous theoretical and experimental progresses in quantum uncertainty relations, the experimental investigation of most current, and universal formalism of uncertainty relations, namely majorization uncertainty relations (MURs), has not been implemented yet. A significant problem is that previous studies on the classification of MURs only focus on their mathematical expressions, while the physical difference between various forms remains unknown. First, we use a guessing game formalism to study the MURs, which helps us disclosing their physical nature, and distinguishing the essential differences of physical features between diverse forms of MURs. Second, we tighter the bounds of MURs in terms of flatness processes, or equivalently, in terms of majorization lattice. Third, to benchmark our theoretical results, we experimentally verify MURs in the photonic systems.
\end{abstract}

\maketitle

\textbf{\textit{Introduction.---}}In the quantum world, measurements allow us to gain information from a system, and the action of measurements on quantum systems is fully embraced in the areas of quantum optics, quantum information theories, and quantum communication tasks. It is therefore of great practical interest to study the limitations and precisions of quantum measurements. In taking the measurements on board, however, it appears that quantum mechanics imposes strict limitation on our ability to specify the precise outcomes from incompatible measurements simultaneously, which is known as ``{\it Heisenberg Uncertainty Principle}'' \cite{Heisenberg1927}.

In the context of the uncertainty principle, both variance \cite{Kennard1927, Weyl1927, Robertson1929,Schrodinger1930,Huang2012,Maccone2014,Xiao2016W,Xiao2016M,Xiao2016S,Guise2018,Qiao2018T,Xiao2017I} and entropies \cite{Deutsch1983,Partovi1983,Kraus1987,Maassen1988,Ivanovic1992,Sanchez1993,Ballester2007,Wu2009,Berta2010,Li2011,Prevedel2011,Huang2011,Tomamichel2011,Coles2012,Coles2014,Kaniewski2014,Furrer2014,Li2015,Berta2016,Xiao2016St, Xiao2016QM, Xiao2016U,review,Xiao2018H,Xiao2018Q,Chen2018,Coles2019,Li2019,Wang2019E,Xiao2019CIP,XiaoPhD} are by no reason the most adequate to use. The attempt to find all suitable uncertainty measures has triggered the interest of the scientific community in the quest for a better understanding and exploitation of the precisions of quantum measurements. As previously shown in \cite{PRL,Narasimhachar2016}, any eligible candidate of uncertainty measures should be: (\rmnum{1}) non-negative; (\rmnum{2}) a function only of the probability vector associated with the measurement outcomes; (\rmnum{3}) invariant under permutations; (\rmnum{4}) nondecreasing under a random relabeling. According to these restrict conditions, a qualified uncertainty measure should be a non-negative Schur-concave function, and the {\it majorization uncertainty relations} (MURs) arise from the fact that all Schur-concave functions can, in general, preserve the partial order induced by majorization~\cite{Hardy1929,Partovi2011,Majorization}. Based on the mathematical expressions, the notions of MURs are classified into two categories; that are direct-product MUR (DPMUR) \cite{PRL,JPA} and direct-sum MUR (DSMUR) \cite{PRA,M7}. 
In the original work of \cite{PRA}, the essential differences of mathematical features between DPMUR and DSMUR (i.e. tensor and direct-sum) are compared and analyzed. However, it is fair to say, that our understanding of the physical essences of MURs is still very limited.

In this work, our first contribution, which also reflects the original intention of this work, is to characterize the essential differences of physical features between DPMUR and DSMUR theoretically. More precisely, we show that the difference between these MURs are more than its mathematical expressions, what really matters is the joint uncertainty they represent. DPMUR is identified as a type of {\it spatially-separated joint uncertainty}, and meanwhile DSMUR is recognized as a type of {\it temporally-separated joint uncertainty}. Despite previous developments on MURs, there is still a gap between their optimal bounds and the ones constructed in \cite{PRL,JPA,PRA,M7}. Our second contribution is to fill this gap by applying a technique, called flatness process \cite{Cicalese2002}, which is also known as concave envelope in Mathematics.

Besides theoretical advancements, the experimentally implementations of quantum uncertainty relations are also already of great interest, as they are a pioneering demonstration of the limitations on quantum measurements, and may also inspire breakthrough in modern quantum technologies. So far, the uncertainty relations based on variance and entropies have been successfully realized in various physical systems, including neutronic systems \cite{neutron1,neutron2,neutron3}, photonic systems \cite{photon1,photon2,photon3,photon4,photon5,photon6,photon7}, nitrogen-vacancy (NV) centres \cite{NV}, nuclear magnetic resonance (NMR) \cite{NMR}, and so forth. However, an experimental demonstration of the uncertainty relations given by majorization has never been shown. To boost the experimentally study of the uncertainty relations, it is highly desirable to know how to investigate MURs in a physical system. The third contribution of this work is that we implement the MURs by measuring a qudit state encoded with the path and polarization degree of the freedom of a photon system for the first time.

\begin{figure}[h]
\centering
\begin{tikzpicture}


  \node[] at (-2.7,-0.4) {(a) DPMUR.};

  \draw[very thick,dashed,fill=magenta,opacity=0.2] (-5,3) rectangle (-0.75,0);
  \draw[very thick,fill=black!70,black!70] (-4.8,2.5) rectangle (-3.8,1.8) node[pos=0.5,white] {\large $\Gamma_\rho$};
  \draw[very thick] (-3,2.5) rectangle (-2,1.8);

  \draw[very thick,->] (-2.5,1.95) -- (-2.35,2.4);
  \draw[very thick] (-2.15,1.95) arc (10:170:0.35);

  \draw[very thick,fill=black!70,black!70] (-4.8,1.2) rectangle (-3.8,0.5) node[pos=0.5,white] {\large $\Gamma_\rho$};
  \draw[very thick,->] (-2.5,0.65) -- (-2.35,1.1);

  \draw[very thick] (-2.15,0.65) arc (10:170:0.35);
  \draw[very thick] (-3,1.2) rectangle (-2,0.5);

  \node[] at (-2.5,2.7) {$M$};
  \node[] at (-2.5,0.3) {$N$};

  \draw[very thick,->] (-3.8,2.15) -- (-3,2.15);
  \draw[very thick,->] (-3.8,0.8) -- (-3,0.8);

  \draw[very thick,->] (-2,2.15) -- (-1.15,2.15) -- (-1.15,1.75);
  \draw[very thick,->] (-2,0.8) -- (-1.15,0.8) -- (-1.15,1.2);
  \node[] at (-1.6,2.4) {$a$};
  \node[] at (-1.6,0.55) {$b$};
  \node[] at (-1.15,1.475) {$(a,b)$};

  \node[] at (-3.4,2.4) {$\mathcal H$};
  \node[] at (-3.4,0.55) {$\mathcal H$};

  \begin{scope}[shift={(-0.25,0)}]

  \draw[very thick,dashed,fill=cyan,opacity=0.2] (-0.2,3) rectangle (3.7,0);
  \draw[very thick,fill=black!70,black!70] (0,2.5) rectangle (1,1.8) node[pos=0.5,white] {\large $\Gamma_\rho$};
  \draw[very thick] (1.8,2.5) rectangle (2.8,1.8);

  \draw[very thick,->] (2.3,1.95) -- (2.45,2.4);
  \draw[very thick] (2.65,1.95) arc (10:170:0.35);

  \draw[very thick] (1.8,1.2) rectangle (2.8,0.5);


  \draw[very thick,->] (1,2.15) -- (1.8,2.15) node[pos=0.5,shift={(0,0.22)}] {$\mathcal H$};




  \node[] at (2.3,2.7) {$M/N$};
  \node[] at (2.3,0.85) {$R$};

  \draw[very thick,->] (2.3,1.2) -- (2.3,1.8);
  \draw[very thick,->] (2.8,2.15) -- (3.3,2.15);
  \node[] at (3.48,2.15) {$\p$};
  \node[] at (2.1,1.4) {$0$};
  \node[] at (2.5,1.4) {$1$};

  \node[] at (2,-0.4) {(b) DSMUR.};
  \end{scope}
\end{tikzpicture}
\caption{(color online)  Schematic illustration of the DPMUR (a) and DSMUR (b).}
\label{comp}
\end{figure}
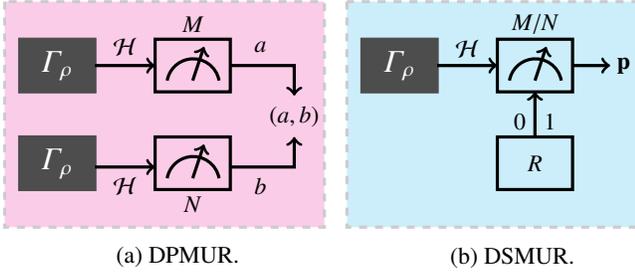

\textbf{\textit{Direct-Product.---}}The construction of DPMUR proposed in~\cite{PRL,JPA} is best formulated as a game, shown in Fig. (\ref{comp}a),  between an experimentalist (Alice) and a referee (Bob) trying to guess the measurement outcomes. More explicitly, the game considered here is as follows: two black boxes $\mathrm{\Gamma}_{\rho}$ are located in different positions, each of them provides a quantum state $\rho$ to Alice and She implements her measurements $M$ and $N$ to the input state separately in each round. Alice knows the measurement outcome from experiments, but she does not know the actual state given to her. By repeating the same procedure a sufficient number of times, Alice derive distinct pairs of measurement outcomes, and the goal of Bob is to guess $k$ distinct pairs of them correctly.

Mathematically, $\mathrm{\Gamma}_{\rho}$ is a preparation channel, generating quantum state $\rho$ on a Hilbert space $\mathcal{H} \cong \mathbb{C}^{d}$~\cite{Kraus1983}. The outcome $a$ of the positive operator valued measure (POVM) $M = \{M_{a}\}$ occurs with probability $p_{a} := \Tr(M_{a}~\rho)$ ($a=1, \ldots, n$). Similarly, we implement the measurement $N$, and denote the corresponding probability distribution by $q_{b} := \Tr(N_{b}~\rho)$ ($b=1, \ldots, m$). We collect the numbers $p_{a}$ and $q_{b}$ into two probability vectors $\p$ and $\q$, respectively.

In the present scheme, the joint uncertainty between $\p$ and $\q$ is captured by the maximal probability of Bob in winning the game. For example, when Alice receives outcome $(a, b)$ from measurements, Bob will have a maximal probability $\max_{\rho}p_{a}q_{b}$ to win. In general, if Alice receive $k$ distinct pairs of outcomes, then the quantum mechanics gives Bob $R_{k}$ chance to win, with
\begin{align}
R_{k} := \max\limits_{I_{k}}\max\limits_{\rho}\sum\limits_{(a,b)\in I_{k}}
p_{a}q_{b},\notag
\end{align}
where $I_{k} \subset [n] \times [m]$ is a subset of $k$ distinct pair of indices. Here $[n] = \{ 1, \ldots, n\}$ is the set of natural numbers ranging from $1$ to $n$, and $k \in [mn]$. Clearly, such guessing game can be reformulated as the following $mn$ inequalities
\begin{align}
\sum\limits_{(a,b)\in I_{k}}
p_{a}q_{b} \leqslant R_{k}. \quad\forall k\in[mn] \notag
\end{align}
A concise approach of expressing the inequalities mentioned above is to use the majorization ``$\prec$'' \cite{Majorization}; A probability vector $\x \in \mathbb{R}^n$ is majorizied by $\y \in \mathbb{R}^n$, i.e. $\x \prec \y$, if and only if $\sum_{j=1}^{k} x^{\downarrow}_{j} \leqslant \sum_{j=1}^{k} y^{\downarrow}_{j}$ for all $1\leqslant k \leqslant n-1$. Here the down-arrow indicates that the components of the vectors are arranged in a non-increasing order. Now we can abbreviate the guessing game into one inequality
\begin{align}\label{dp}
\p\otimes\q \prec \bm{r},
\end{align}
with $\bm{r} :=(R_{1}, R_{2}-R_{1}, \ldots, R_{mn}-R_{mn-1})$. Consequently, the essence of DPMUR is captured by our framework of guessing game, which demonstrates a {\it spatially-separated joint uncertainty}. Note that $R_{k}$ can be in general difficult to calculate explicitly, as they involve an optimization problem. However, the authors of~\cite{PRL} provide us a calculate-friendly bound $\bm{t}$, satisfying $\p\otimes\q \prec \bm{r} \prec \bm{t}$.

Physically, MURs are very general; they encompass the most well-known entropic functions used in quantum information theory, but they are not restricted to these functions. Mathematically, majorization lattice forms a {\it complete lattice}; the optimal bounds for MURs exist. To obtain the optimal bounds, it suffices to perform a standard process (flatness process) $\mathcal{F}$. Hence, the implementation of the process $\mathcal{F}$ on $\p\otimes\q \prec \bm{r} \prec \bm{t}$ could lead to a new relation
\begin{align}\label{fdp}
\p\otimes\q \prec \mathcal{F} (\bm{r}) \prec \bm{r} \prec \mathcal{F} (\bm{t})
\prec \bm{t},
\end{align}
where $\bm{r}$ and $\bm{t}$ are the bounds given in \cite{PRL}. Because of the mathematical properties of flatness process (concave envelope), the vector $\mathcal{F} (r)$ is optimal. However, a major drawback of $\mathcal{F} (\bm{r})$ is that the calculation of $\mathcal{F} (\bm{r})$ is even harder than $\bm{r}$. With the help of flatness process, we also obtain an effectively computable bound $\mathcal{F} (\bm{t})$, which is tighter than the original $\bm{t}$.
We defer the construction of $\bm{t}$, and the rigorous definition of flatness process to the Supplementary Material \cite{SM}.

\begin{figure*}[tbph]
\includegraphics [width=16cm,height=7.6cm]{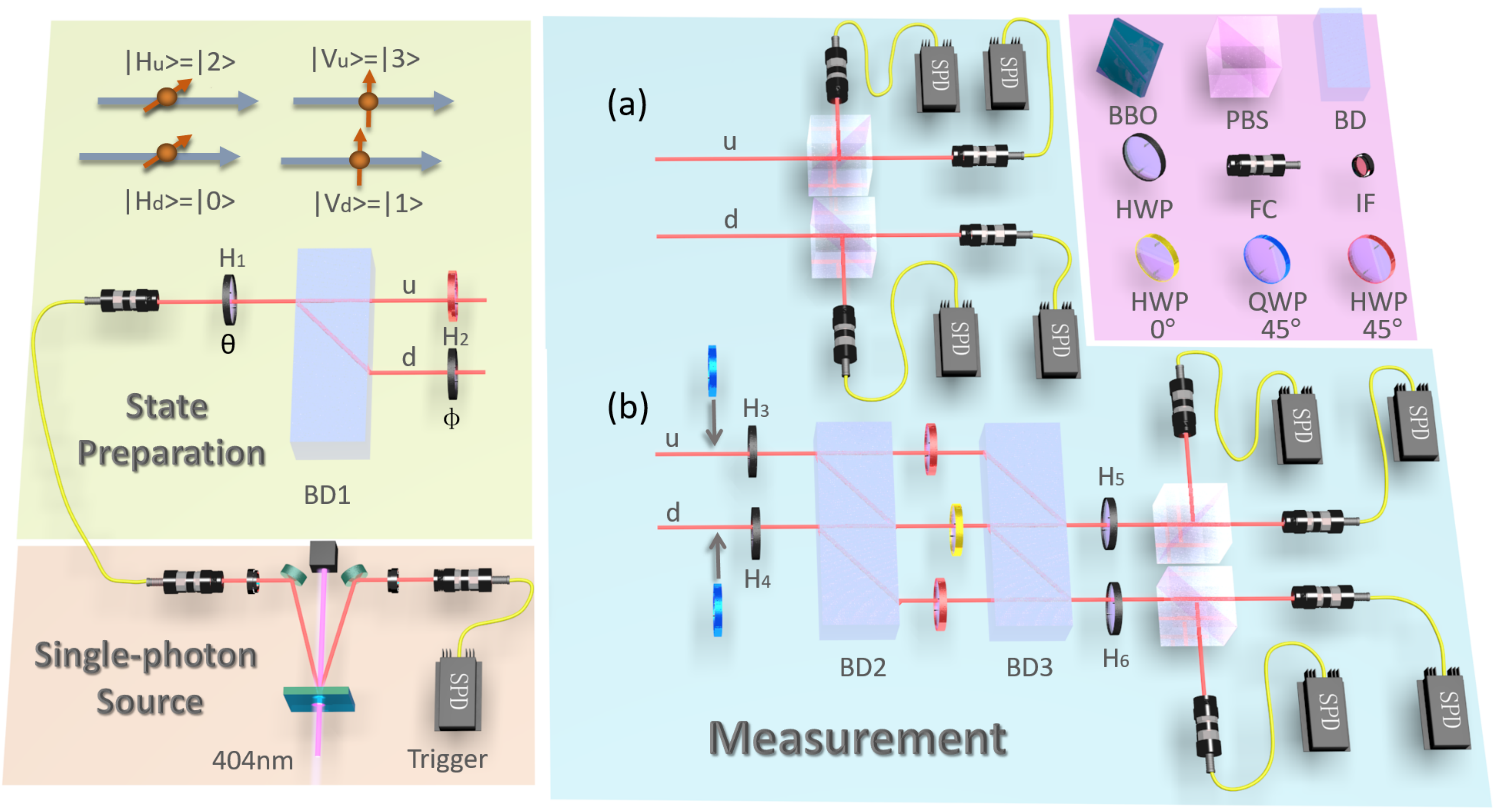}
\caption{(color online) Experimental setup. In the single-photon source module, the photon pairs generated in spontaneous parametric down-conversion are coupled into single-mode fibers separately. One photon is detected by a single-photon detector (SPD) acting as a trigger. In the state preparation module, a qudit is encoded by four modes of the single photon. H and V denote the horizontal polarization and vertical polarization of the photon, respectively. The subscripts u and d represent the upper and lower spatial modes of the photon, respectively. The half-wave plates (H$_{1}$, H$_{2}$) and beam displacer (BD1) are used to generate desired qudit state. In the measurement module, the red HWPs with an angle of $45^{\circ}$ and beam displacers (BDs) comprise the interferometric network to perform the desired measurement; the yellow HWP with an angle of $0^{\circ}$ are inserted into the middle path to compensate the optical path difference between the upper and lower spatial modes. To realize measurement $B$ shown in Eq. (\ref{2measurement}), two quarter-wave plates are need to be inserted in device (b). Output photons are detected by SPDs.}
\label{fig:setup}
\end{figure*}

\textbf{\textit{Direct-Sum.---}}Our protocol of DSMUR combines guessing game with a binary random number generator $R$, shown in Fig. (\ref{comp}b); in each round, the measurement is determined by $R$. More specifically, $R$ outputs number $0$ with probability $\lambda$, and $1$ with probability $1-\lambda$. After receiving $0$, Alice performs $M$, otherwise she implements $N$. Again the goal of Bob is to guess the measurement outcome of Alice. The maximal probability for Bob to guess $k$ outcomes correctly is given by
\begin{align}\label{sk}
S_{k} := \max\limits_{|I|+|J|=k}\max\limits_{\rho}\sum\limits_{\substack{a\in I \subset [n] \\b\in J \subset [m] }}
\left(\lambda p_{a} + (1-\lambda) q_{b}\right) \notag
\end{align}
where $|\bigcdot|$ denotes the cardinality of $\bigcdot$. There exists an efficient way of computing the success probability $S_{k}$. Let us define an operator $G_{c}$ as
\begin{equation}
G_{c}(\lambda) := \left\{
\begin{aligned}
&\lambda M_{c}  ~ & 1 \leqslant c \leqslant n, \\
&(1-\lambda) N_{c-n} ~ & n+1 \leqslant c \leqslant n+m. \notag
\end{aligned}
\right.
\end{equation}
Then the quantity $S_{k}$ becomes
\begin{align}
S_{k}(\lambda) = \max\limits_{|I|=k}
\lambda_{1}\left( \sum\limits_{c\in I \subset [n+m]}G_{c}(\lambda) \right),\notag
\end{align}
where $\lambda_{1}(\bigcdot)$ denotes the maximum eigenvalue of the argument. Now we can conclude our guessing game within one inequality by using majorization; that is
\begin{align}\label{ds}
\lambda\p\oplus(1-\lambda)\q \prec \bm{s}(\lambda),
\end{align}
with $\bm{s}(\lambda):=(S_{1}(\lambda), S_{2}(\lambda)-S_{1}(\lambda), \ldots, S_{m+n}(\lambda)-S_{m+n-1}(\lambda))$. In the framework of DSMUR, classical uncertainty of the random number generator is injected into the guessing game, and as a consequence $\lambda\p\oplus(1-\lambda)\q$ is a hybrid type of uncertainty, mingling both classical and quantum uncertainties. Quite remarkably, the measurements, monitored by $R$, can be implemented in the same position but cannot performed simultaneously, and hence $\lambda\p\oplus(1-\lambda)\q$ reveals a {\it temporally-separated joint uncertainty}. It should be stressed here that the original DSMUR \cite{PRA,M7} is a special case of our notion by first taking $\lambda=1/2$, and then timing the scalar 2, i.e. $\p\oplus\q\prec2\bm{s}(1/2)$.

Let us now consider the DSMUR after flatness process
\begin{align}\label{fds}
\lambda\p\oplus(1-\lambda)\q \prec \mathcal{F} (\bm{s}(\lambda)) \prec \bm{s}(\lambda).
\end{align}
Unlike the case of DPMUR, the vector $\mathcal{F} (\bm{s}(\lambda))$ is optimal and can be calculate explicitly. Moreover, for DSMUR with uniform distribution, i.e. $\lambda=1/2$, one can easily show that $\p\oplus\q \prec 2 \mathcal{F} (\bm{s}(1/2)) \prec 2\bm{s}(1/2)$. Note that, the flatness process cannot be applied to $\p\oplus\q\prec2\bm{s}(1/2)$ directly \cite{Li2019,Wang2019E}, since the results presented in \cite{Cicalese2002} are only designed for probabilities. To accommodate this, a more general lemma is proved in our Supplementary Material \cite{SM}.



\textbf{\emph{Experimental setup.---}}The experimental setup used for verifications of DPMUR and DSMUR is shown in Fig.~\ref{fig:setup}. It consists of single-photon source (see Supplementary Material for details), state preparation, and measurement modules.

In the state preparation module, we prepare a family of $4$-dimensional states with parameters $\theta$ and $\phi$, $|\psi_{\theta,\phi}\rangle=\cos\theta\sin\phi|0\rangle+\cos\theta\cos\phi|1\rangle+\sin\theta|2\rangle+0|3\rangle$,
which is encoded by four modes of a single photon. States $|0\rangle$ and $|1\rangle$ are encoded by different polarizations of the photon in the lower mode, and $|2\rangle$ and $|3\rangle$ are encoded by polarization of the photon in the upper mode. The beam displacer (BD) causes the vertical polarized photons to be transmitted directly, and the horizontal polarized photons to undergo a $4$~mm lateral displacement. When the photon passes through a half-wave plate (H$_{1}$) with a certain setting angle, it is splited by BD1 into two parallel spatial modes -- upper and lower modes. Therefore the photon is prepared in the desired state $|\psi_{\theta,\phi}\rangle$, with parameters $\theta$ and $\phi$ are controlled by the plates H$_{1}$ and H$_{2}$, respectively.


In the measurement module, we consider a setting with a pair of measurements

\begin{equation}\label{2measurement}
\begin{aligned}
&A=\left\{|0\rangle, |1\rangle, |2\rangle, |3\rangle\right\}
\\&B=\begin{aligned}&\{(|0\rangle-i|1\rangle-i|2\rangle+|3\rangle)/2, (|0\rangle-i|1\rangle+i|2\rangle-|3\rangle)/2, \\& (|0\rangle+i|1\rangle-i|2\rangle-|3\rangle)/2,(|0\rangle+i|1\rangle+i|2\rangle+|3\rangle)/2\}\end{aligned}
\end{aligned}
\end{equation}

and another one with multi-measurements
\begin{equation}\label{3measurement}
\begin{aligned}
&C_{1}=\left\{|0\rangle, |1\rangle, |2\rangle, |3\rangle\right\}
\\&C_{2}=\left\{|0\rangle, \frac{|2\rangle+|3\rangle}{\sqrt{2}}, \frac{|1\rangle+|2\rangle-|3\rangle}{\sqrt{3}}, \frac{2|1\rangle-|2\rangle+|3\rangle}{\sqrt{6}}\right\}
\\&C_{3}=\left\{\frac{|2\rangle+|3\rangle}{\sqrt{2}}, |1\rangle, \frac{|0\rangle+|2\rangle-|3\rangle}{\sqrt{3}}, \frac{2|0\rangle-|2\rangle+|3\rangle}{\sqrt{6}}\right\}.
\end{aligned}
\end{equation}

\begin{figure}[t]
\centering
\begin{tikzpicture}[scale=0.75]
  \node[inner sep=0pt] at (0,0) {\includegraphics[width=8.8cm,height=8cm]{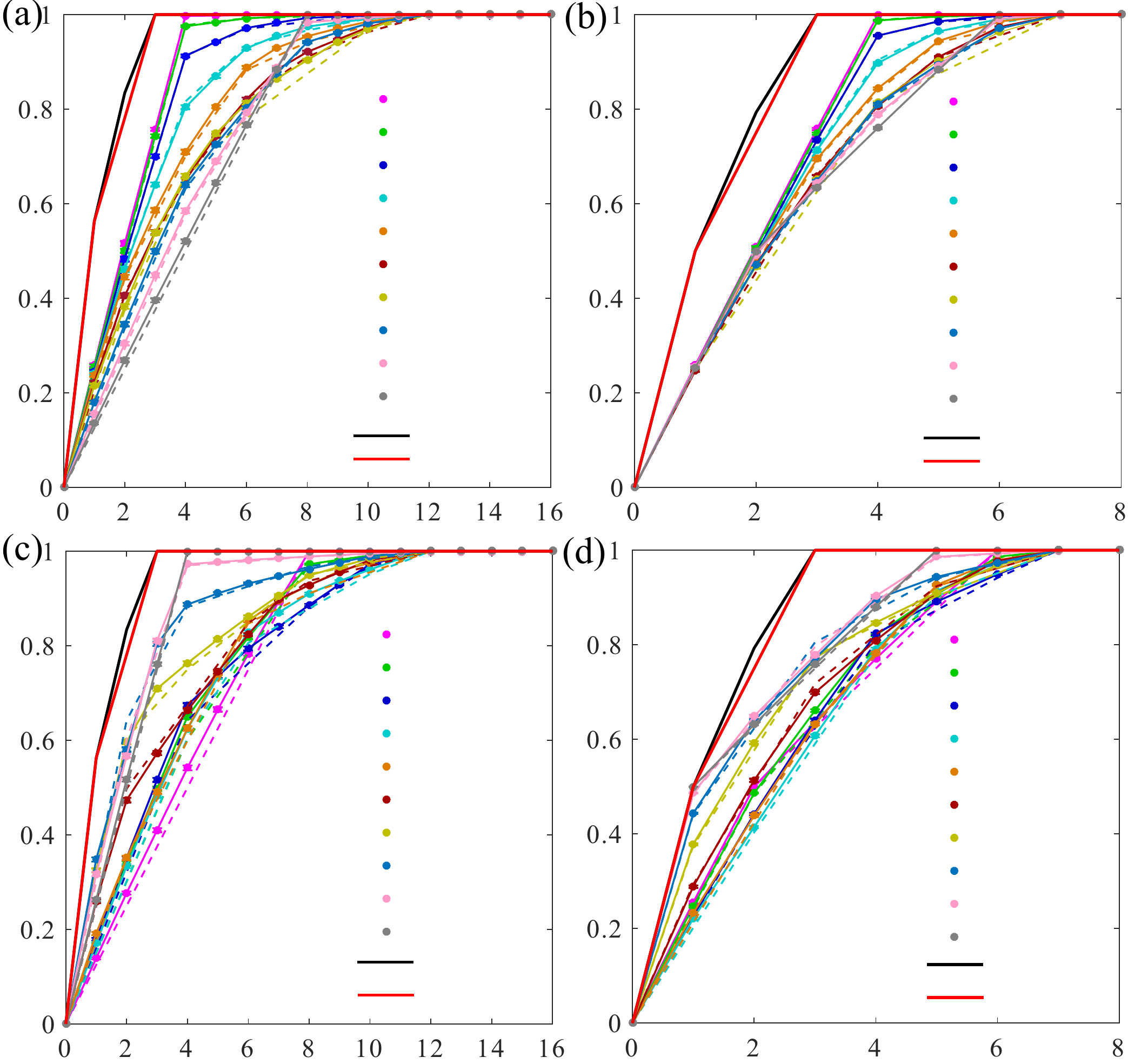}};

  \node[scale=0.8] at (-1,4.35) {\footnotesize $\phi = 0^{\circ}$};
  \node[scale=0.8] at (-0.93,4.02) {\footnotesize $\phi = 10^{\circ}$};
  \node[scale=0.8] at (-0.93,3.69) {\footnotesize $\phi = 20^{\circ}$};
  \node[scale=0.8] at (-0.93,3.36) {\footnotesize $\phi = 30^{\circ}$};
  \node[scale=0.8] at (-0.93,3.03) {\footnotesize $\phi = 40^{\circ}$};
  \node[scale=0.8] at (-0.93,2.70) {\footnotesize $\phi = 50^{\circ}$};
  \node[scale=0.8] at (-0.93,2.37) {\footnotesize $\phi = 60^{\circ}$};
  \node[scale=0.8] at (-0.93,2.04) {\footnotesize $\phi = 70^{\circ}$};
  \node[scale=0.8] at (-0.93,1.71) {\footnotesize $\phi = 80^{\circ}$};
  \node[scale=0.8] at (-0.93,1.38) {\footnotesize $\phi = 90^{\circ}$};
  \node[scale=0.8] at (-0.93,1.05) {\footnotesize $\bm{t}$};
  \node[scale=0.8] at (-0.93,0.72) {\footnotesize $\mathcal{F} ( \bm{t} )$};

  \node[scale=0.8] at (-1,-1) {\footnotesize $\theta = 0^{\circ}$};
  \node[scale=0.8] at (-0.93,-1.33) {\footnotesize $\theta = 10^{\circ}$};
  \node[scale=0.8] at (-0.93,-1.66) {\footnotesize $\theta = 20^{\circ}$};
  \node[scale=0.8] at (-0.93,-1.99) {\footnotesize $\theta = 30^{\circ}$};
  \node[scale=0.8] at (-0.93,-2.32) {\footnotesize $\theta = 40^{\circ}$};
  \node[scale=0.8] at (-0.93,-2.65) {\footnotesize $\theta = 50^{\circ}$};
  \node[scale=0.8] at (-0.93,-2.98) {\footnotesize $\theta = 60^{\circ}$};
  \node[scale=0.8] at (-0.93,-3.31) {\footnotesize $\theta = 70^{\circ}$};
  \node[scale=0.8] at (-0.93,-3.64) {\footnotesize $\theta = 80^{\circ}$};
  \node[scale=0.8] at (-0.93,-3.97) {\footnotesize $\theta = 90^{\circ}$};
  \node[scale=0.8] at (-0.93,-4.30) {\footnotesize $\bm{t}$};
  \node[scale=0.8] at (-0.93,-4.63) {\footnotesize $\mathcal{F} ( \bm{t} )$};

  \node[scale=0.8] at (4.9,4.35) {\footnotesize $\phi = 0^{\circ}$};
  \node[scale=0.8] at (4.97,4.02) {\footnotesize $\phi = 10^{\circ}$};
  \node[scale=0.8] at (4.97,3.69) {\footnotesize $\phi = 20^{\circ}$};
  \node[scale=0.8] at (4.97,3.36) {\footnotesize $\phi = 30^{\circ}$};
  \node[scale=0.8] at (4.97,3.03) {\footnotesize $\phi = 40^{\circ}$};
  \node[scale=0.8] at (4.97,2.70) {\footnotesize $\phi = 50^{\circ}$};
  \node[scale=0.8] at (4.97,2.37) {\footnotesize $\phi = 60^{\circ}$};
  \node[scale=0.8] at (4.97,2.04) {\footnotesize $\phi = 70^{\circ}$};
  \node[scale=0.8] at (4.97,1.71) {\footnotesize $\phi = 80^{\circ}$};
  \node[scale=0.8] at (4.97,1.38) {\footnotesize $\phi = 90^{\circ}$};
  \node[scale=0.8] at (4.97,1.05) {\footnotesize $\bm{s}(1/2)$};
  \node[scale=0.8] at (5.05,0.72) {\footnotesize $\mathcal{F} ( \bm{s}(1/2) )$};

  \node[scale=0.8] at (4.9,-1) {\footnotesize $\theta = 0^{\circ}$};
  \node[scale=0.8] at (4.97,-1.33) {\footnotesize $\theta = 10^{\circ}$};
  \node[scale=0.8] at (4.97,-1.66) {\footnotesize $\theta = 20^{\circ}$};
  \node[scale=0.8] at (4.97,-1.99) {\footnotesize $\theta = 30^{\circ}$};
  \node[scale=0.8] at (4.97,-2.32) {\footnotesize $\theta = 40^{\circ}$};
  \node[scale=0.8] at (4.97,-2.65) {\footnotesize $\theta = 50^{\circ}$};
  \node[scale=0.8] at (4.97,-2.98) {\footnotesize $\theta = 60^{\circ}$};
  \node[scale=0.8] at (4.97,-3.31) {\footnotesize $\theta = 70^{\circ}$};
  \node[scale=0.8] at (4.97,-3.64) {\footnotesize $\theta = 80^{\circ}$};
  \node[scale=0.8] at (4.97,-3.97) {\footnotesize $\theta = 90^{\circ}$};
  \node[scale=0.8] at (4.97,-4.30) {\footnotesize $\bm{s}(1/2)$};
  \node[scale=0.8] at (5.05,-4.63) {\footnotesize $\mathcal{F} ( \bm{s}(1/2) )$};

\end{tikzpicture}
\caption{(color online) Experimental investigation of DPMUR and DSMUR with two measurements. Lorenz curves in (a) and (b) show the experimental datum for DPMUR and DSMUR with states $|\psi_{\pi/4,\phi}\rangle$, and the Lorenz curves in (c) and (d) exhibit the joint uncertainties of DPMURs and DSMURs with states $|\psi_{\theta,\pi/4}\rangle$. Blue curves represent the previous bounds $\bm{t}$ ($\bm{s}(1/2)$), and our improved bounds $\mathcal{F} (\bm{t})$ ($ \mathcal{F} ( \bm{s}(1/2) )$)) are highlighted in red. The dotted lines marked with different colours indicate joint uncertainties with different parameters.}
\label{DSandDP_twoM}
\end{figure}

In Fig. (\ref{fig:setup}), device (a) is used to realize measurements $A$ and $C_{1}$. In the presence of quarter-wave plates with an angle of $45^{\circ}$, device (b) is used to realize measurement $B$, and the setting angles of H$_{3}$--H$_{6}$ are $45^{\circ}$, $0^{\circ}$, $22.5^{\circ}$, and $22.5^{\circ}$. On the other hand, in the absence of quarter-wave plates,
device (b) is exploited to implement measurement $C_{2} ( C_{3} )$ when the setting angles of H$_{3}$--H$_{6}$ are $22.5^{\circ}$, $0^{\circ}(45^{\circ})$, $27.4^{\circ}$, and $0^{\circ}$.

\begin{figure}[t]
\begin{tikzpicture}[scale=0.75]
  \node[inner sep=0pt] at (0,0) {\includegraphics[width=8.8cm,height=8.4cm]{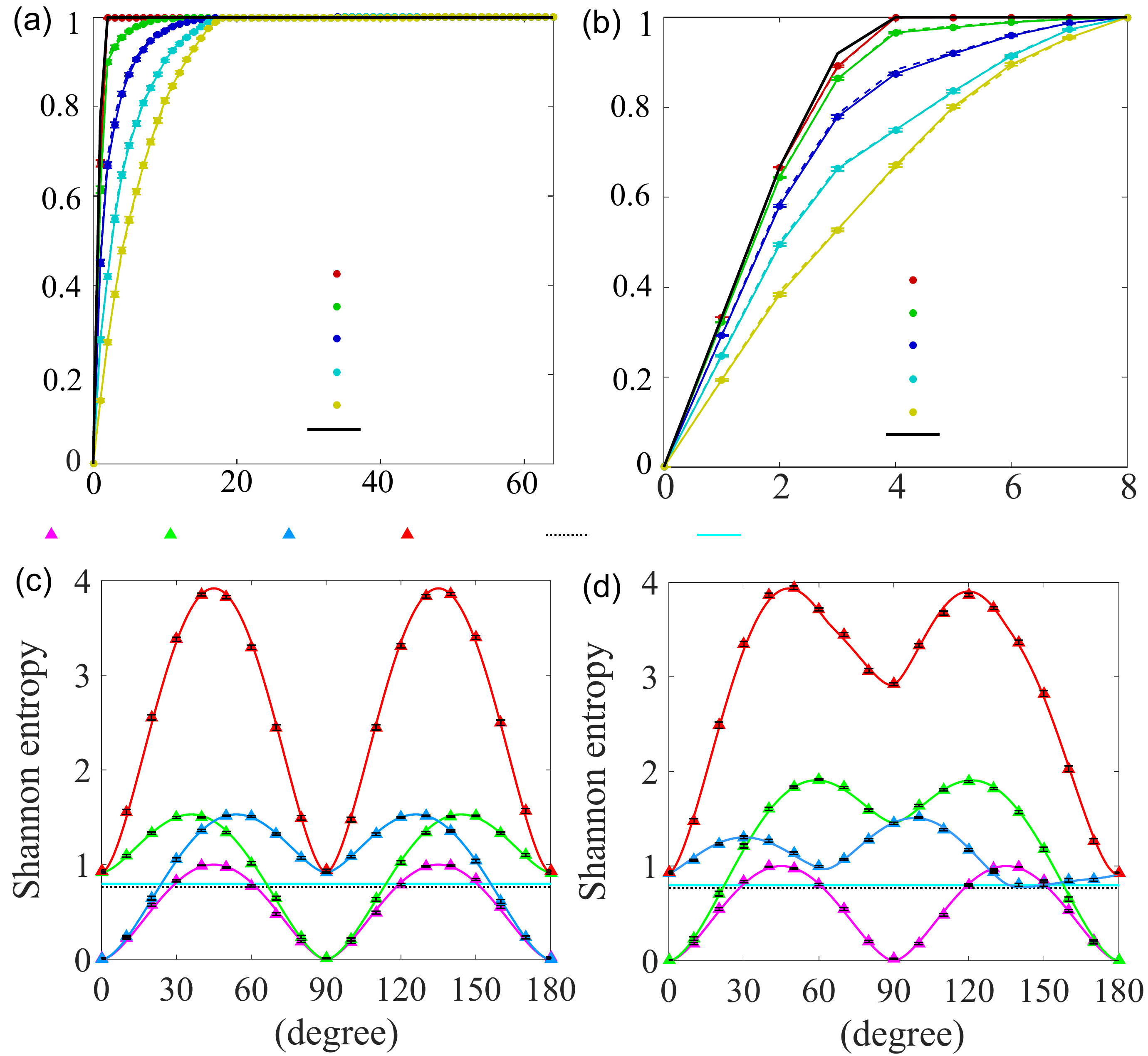}};

  \node[scale=0.8] at (-1.6,2.73) {\footnotesize $\phi = 0^{\circ}$};
  \node[scale=0.8] at (-1.53,2.40) {\footnotesize $\phi = 10^{\circ}$};
  \node[scale=0.8] at (-1.53,2.07) {\footnotesize $\phi = 20^{\circ}$};
  \node[scale=0.8] at (-1.53,1.74) {\footnotesize $\phi = 30^{\circ}$};
  \node[scale=0.8] at (-1.53,1.41) {\footnotesize $\phi = 40^{\circ}$};
  \node[scale=0.8] at (-1.53,1.08) {\footnotesize $\mathcal{F}(\bm{t}^{\prime})$};

  \node[scale=0.8] at (4.26,2.65) {\footnotesize $\phi = 0^{\circ}$};
  \node[scale=0.8] at (4.33,2.32) {\footnotesize $\phi = 10^{\circ}$};
  \node[scale=0.8] at (4.33,1.99) {\footnotesize $\phi = 20^{\circ}$};
  \node[scale=0.8] at (4.33,1.66) {\footnotesize $\phi = 30^{\circ}$};
  \node[scale=0.8] at (4.33,1.33) {\footnotesize $\phi = 40^{\circ}$};
  \node[scale=0.8] at (4.63,1.00) {\footnotesize $\mathcal{F}(\bm{s}^{\prime}(1/3))$};

  \node[scale=0.8] at (-4.8,-0.1) {\footnotesize $H(C_{1})$};
  \node[scale=0.8] at (-3.6,-0.1) {\footnotesize $H(C_{2})$};
  \node[scale=0.8] at (-2.4,-0.1) {\footnotesize $H(C_{3})$};
  \node[scale=0.8] at (-1.0,-0.1) {\footnotesize $\sum_{i} H(C_{i})$};
  \node[scale=0.8] at (0.7,-0.06) {\footnotesize $H ( \mathcal{F}(\bm{t}^{\prime}) )$};
  \node[scale=0.8] at (2.7,-0.06) {\footnotesize $H ( 3\mathcal{F}(\bm{s}^{\prime}(1/3)))$};

  \node[scale=1] at (-3.3,-5.38) {\footnotesize $\phi$};
  \node[scale=1] at (2.45,-5.38) {\footnotesize $\theta$};

\end{tikzpicture}
\caption{(color online) Experimental investigation of DPMURs and DSMURs with three measurements. The plots in (a) and (b) show the joint uncertainties of the quantum state $|\psi_{\pi,\phi}\rangle$ varied with $\phi$ under measurements $C_{1}$, $C_{2}$, $C_{3}$, and our bound $\mathcal{F}(\bm{t}^{\prime})$, $\mathcal{F}(\bm{s}^{\prime}(1/3))$ by means of the Lorenz curves. The plots in (c) and (d) show the Shannon entropic uncertainty relations, with measurements $C_{1}$, $C_{2}$, $C_{3}$, of states $|\psi_{\pi,\phi}\rangle$ and $|\psi_{\theta,\pi/2}\rangle$ respectively. Here the curves marked with magenta, green, and blue stand for the uncertainty associated with measurements $C_{1}$, $C_{2}$ and $C_{3}$; that are $H(C_{1})$, $H(C_{2})$ and $H(C_{3})$, and the curve dyed red represents their joint uncertainties $\sum_{i} H ( C_{i} )$.
The dotted line ($H ( \mathcal{F}(\bm{t}^{\prime}) )=0.7651$) and solid line ($H ( 3\mathcal{F}(\bm{s}^{\prime}(1/3)))=0.7979$) are the bounds of DPMUR and DSMUR. Error bars emphasize the standard deviation of our experimental datum.}
\label{DPDS_shannon}
\end{figure}

\textbf{\emph{Experimental results.---}}The experimental datum induced by performing measurements (\ref{2measurement}, \ref{3measurement}) on $|\psi_{\theta,\phi}\rangle$ are acquired, and the target of verifying the MURs is fulfilled. In order to unfold the MURs intuitively and geometrically, we employ the technique of {\it Lorentz curve} \cite{Majorization}; for an non-negative vector $\x=(x_{i})_{i=1}^{n}$ with non-increasing order, the corresponding Lorenz curve $\mathcal{L}(\x)$ is defined as the linear interpolation of the points $\{(k,\sum_{i=1}^{k}x_{i})_{k=0}^{n}\}$ with the convention $(0, 0)$ for $k=0$. Based on Lorenz curves, we have $\mathcal{L}(\x)$ lays everywhere below $\mathcal{L}(\y)$ if and only if $\x \prec \y$.

For measurements $A$ and $B$, the bound $\bm{t}$ for DPMUR $\p\otimes\q$, introduced in \cite{PRL,JPA}, is given by $(0.5625, 0.1661, 0.2714)$, and the bound $2\bm{s}(1/2)$ for DSMUR $\p\oplus\q$, introduced in \cite{PRA}, is given by $(0.5, 0.2071, 0.2929)$. To further improve previous results on MURs, we apply the flatness process $\mathcal{F}$ to the bounds $\bm{t}$, $\bm{s}(1/2)$, and acquire $\mathcal{F} (\bm{t})=(0.5625, 0.21875, 0.21875)$, $\mathcal{F} (\bm{s}(1/2)) = (0.5, 0.25, 0.25)$. In Fig. (\ref{DSandDP_twoM}), the dotted lines are obtained by transforming the experimental datum into Lorenz curves. The experimental plots depicted in Fig. (\ref{DSandDP_twoM}) confirm the betterments of our bounds by showing that all experimental datum-induced Lorenz curves lay below our bounds $\mathcal{F} (\bm{t})$ ($\mathcal{F} (\bm{s}(1/2))$), and our bounds are under the previous ones $\bm{t}$ ($\bm{s}(1/2)$).

For measurements $C_{1}$, $C_{2}$ and $C_{3}$, the bound $\mathcal{F}(\bm{t}^{\prime})$ for DPMUR is given by $(0.7773, 0.2227)$, and the bound $\mathcal{F}(\bm{s}^{\prime}(1/3))$ for DSMUR is given by $(1, 1, 0.7583, 0.2417)/3$. In Fig.~\ref{DPDS_shannon} (a) and (b) we see that the joint uncertainties associated with different parameters $\phi$ of the states $|\psi_{\theta,\phi}\rangle$ are marjorized by our bounds $\mathcal{F}(\bm{t}^{\prime})$ and $\mathcal{F}(\bm{s}^{\prime}(1/3))$. Entropies are important tools in both information theory and quantum information theory, and they are closely related to the majorization. From the properties of majorization, it follows the entropic uncertainty relations $\sum_{i} H ( C_{i} ) \geqslant H ( \mathcal{F}(\bm{t}^{\prime}) )$ and $\sum_{i} H ( C_{i} ) \geqslant H ( 3\mathcal{F}(\bm{s}^{\prime}(1/3))) $ with $H$ stands for the {\it Shannon entropy}. All of this can be seen in Fig.~\ref{DPDS_shannon} (c) and (d).


\textbf{\emph{Conclusions.}}--Our guessing game formalism of MURs enable us to classify DPMUR and DSMUR into spatially-separated and temporally-separated joint uncertainties accordingly, which differs from previous developments and, more important, exhibit the essential differences of physical features between DPMUR and DSMUR theoretically. We also implemented an optical experiment that demonstrates the MURs.
In order to present the experimental data efficiently, a novel technique, called Lorenz curve, has been employed. Furthermore, it is advantageous to apply the techniques of flatness process to tighter the bounds of MURs, and its efficiency is confirmed by our experiment. The existence of MURs provides tremendous flexibility in formulating uncertainty relations, and greatly enhance our understanding of quantum mechanics. Therefore, the new formalism, and tighter bounds, as well as the corresponding experimental investigation presented in this work would deeper our knowledge of the quantum world.

{\bf Acknowledgements:}
This work is supported by the National Natural Science Foundation of China (Grants No. 11574291 and No. 11774334), China Postdoctoral Science Foundation (Grant No. 2016M602012 and No. 2018T110618), National Key Research and Development Program of China (Grants No. 2016YFA0301700 and No.2017YFA0304100), and Anhui Initiative in Quantum Information Technologies. Y. Xiao and G. Gour acknowledge NSERC support. S.-M. Fei acknowledges financial support from the National Natural Science Foundation of China under Grant No. 11675113 and Beijing Municipal Commission of Education (KZ201810028042).

\newpage
\onecolumngrid
\begin{center}
\vspace*{.5\baselineskip}
{\textbf{\large Supplemental Material: \\[3pt] Strong Majorization Uncertainty Relations: Theory and Experiment}}
\end{center}

This supplemental material contains a more detailed analysis and extensions of the results presented in the main text. We may reiterate some of the definitions and concepts in the main text to make the supplemental material more explicit and self-contained.

\section{Majorization Lattice}

Before proceeding, it is worth introducing the basic concepts of lattice.

\begin{definition}[Poset]
A partial order is a binary relation ``$\prec$'' over a set $\mathcal{L}$ satisfying reflexivity, antisymmetry, and transitivity. That is, for all $x$, $y$, and $z$ in $\mathcal{L}$, we have
\begin{enumerate}[label=(\roman*)]
\item Reflexivity: $x \prec x$,
\item Antisymmetry: If $x \prec y$ and $y \prec x$, then $x=y$,
\item Transitivity: If $x \prec y$ and $y \prec z$, then $x \prec z$.
\end{enumerate}
\end{definition}
\noindent Note that without the antisymmetry, ``$\prec$'' is just a preorder. Let us now define the set of all $n$-dimensional probability vectors as
\begin{align}
\mathcal{P}^{n} = \left\{\p=\left(p_{1}, \ldots, p_{n}\right)~|~
p_{j}\in[0,1], \sum\limits_{j=1}^{n}p_{j}=1, p_{j}\geqslant p_{j+1}
\right\},
\end{align}
with components in non-increasing order. Accordingly, majorization is a partial order over $\mathcal{P}^{n}$, i.e. $\langle\mathcal{P}^{n}, \prec\rangle$ is a poset.

\begin{definition}[Lattice]
A poset $\langle \mathcal{L}, \prec \rangle$ is called a join-semilattice, if for any two elements $x$ and $y$ of $\mathcal{L}$, it has a unique least upper bound (lub,supremum) $x \lor y$ satisfying
\begin{enumerate}[label=(\roman*)]
\item $x \lor y \in \mathcal{L}$,
\item $x \prec x \lor y$ and $y \prec x \lor y$.
\end{enumerate}
On the other hand, $\langle \mathcal{L}, \prec \rangle$ is called a meet-semilattice, if for any two elements $x$ and $y$ of $\mathcal{L}$, it has a unique greatest lower bound (glb,infimum) $x \land y$ satisfying
\begin{enumerate}[label=(\roman*)]
\item $x \land y \in \mathcal{L}$,
\item $x \land y \prec x$ and $x \land y \prec y$.
\end{enumerate}
$\langle \mathcal{L}, \prec \rangle$ is called a lattice if it is both a join-semilattice and a meet-semilattice, and denote it as a quadruple $\langle \mathcal{L}, \prec, \land, \lor \rangle$.
\end{definition}

\begin{definition}[Complete Lattice]
A lattice $\langle \mathcal{L}, \prec, \land, \lor \rangle$ is called complete, if for any subset $\mathcal{S}\subset\mathcal{L}$, it has a greatest element, denoted by $\top$ and a least element, denoted by $\bot$ which satisfy
\begin{enumerate}[label=(\roman*)]
\item $x \prec \top$, \, for all $x\in\mathcal{S}$ and
$x \prec y$ \, for all $x\in\mathcal{S}$ $\Rightarrow \top \prec y$,
\item $\bot \prec x$, \, for all $x\in\mathcal{S}$ and
$y \prec x$ \, for all $x\in\mathcal{S}$ $\Rightarrow y \prec \bot$.
\end{enumerate}
\end{definition}

\noindent By embedding the majorization ``$\prec$'', the quadruple $\langle \mathcal{P}^{n}, \prec, \land, \lor \rangle$ forms a complete lattice. We remark that the result of completeness follows directly from the work presented in \cite{Rapat1991SM}, and the algorithm in finding the greatest element and the least element of a subset $\mathcal{S}$ (also known as flatness process) was first introduced in \cite{Cicalese2002SM}. As we are trying to connect the structure of majorization lattice with MURs, here we are only interested in the construction of the greatest element of $\mathcal{S} \subset \mathcal{P}^{n}$. Roughly speaking, there are two steps in finding it; that are

\begin{itemize}
\item Step 1: Finding the largest partial sums; for each $\x = (x_{1}, \ldots, x_{n}) \in \mathcal{S}$, we need to evaluate the following quantities
\begin{align}
Y_{k} := \max_{\x \in \mathcal{S}}\sum_{i=1}^{k} x_{i},
\end{align}
and collect these numbers into a vector $\y := (Y_{1}, Y_{2} - Y_{1}, \ldots, Y_{n} - Y_{n-1}) := (y_{1}, y_{2}, \ldots, y_{n})$. Clearly, we have $\x \prec \y$ for all $\x \in \mathcal{S}$. From now on, we denote the vector $\y$ as $\vee \mathcal{S}$; that is $\y := \vee \mathcal{S}$, and $\x \prec \vee \mathcal{S}$ for all $\x \in \mathcal{S}$. For the set with finite elements, such as $\mathcal{S} = \{ \x_{1}, \ldots, \x_{k}\}$, we can also use $\x_{1}\vee \ldots \vee \x_{k}$ to stand for $\vee \mathcal{S}$.

\item Step 2: Flatness process; let $j$ be the smallest integer in $\left\{2, \ldots, n\right\}$ such that $y_{j}>y_{j-1}$, and $i$ be the greatest integer in $\left\{1, \ldots, j-1\right\}$ such that $y_{i-1} \geqslant (\sum_{k=i}^{j} y_{k})/(j-i+1):=a$. Define
%
\begin{align}\label{eq; mj bound t}
\mathcal{F} ( \y ) := \left(F_{1}, \ldots, F_{n}\right) \quad \text{with} \quad F_{k} =
     \begin{cases}
       a & \text{for}\quad k = i, \ldots, j \\
       y_{k} & \text{otherwise.} \\
     \end{cases}
\end{align}
Here we also use the notation $\mathcal{F} ( \vee \mathcal{S} )$ to denote $\mathcal{F} ( \y )$.
\end{itemize}
Rigorously speaking, to obtain the optimal bounds from a set $\mathcal{S}$, which contains infinite number of elements, by only applying Steps 1 and 2 is far from enough \cite{Li2019,Wang2019E}. Actually, we should first guarantee the target set $\mathcal{S}$ is a subset of some complete lattice, which ensures the existence of the optimal bounds.

A key lemma in proving the optimality is the following lemma, which was first proved in \cite{Cicalese2002SM}
\begin{lem}\label{lem1}
Let $\x$, $\y \in \mathcal{P}^{n}$, there exists a unique optimal upper bound $\mathcal{F} (\x \vee \y) \in \mathcal{P}^{n}$, satisfying
\begin{itemize}
\item $\x \prec \mathcal{F} (\x \vee \y)$, and $\y \prec \mathcal{F} (\x \vee \y)$;
\item for any $\z \in \mathcal{P}^{n}$ satisfying $\x \prec \z$ and $\y \prec \z$, it follows $\mathcal{F} (\x \vee \y) \prec \z$.
\end{itemize}
\end{lem}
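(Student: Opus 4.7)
The plan is to construct $\mathcal{F}(\x\vee\y)$ explicitly following the two-step recipe described just before the lemma, verify it satisfies the two bullets, and invoke antisymmetry of $\prec$ on $\mathcal{P}^n$ for uniqueness. For the construction, set $Y_k:=\max\{\sum_{i=1}^{k} x_i,\sum_{i=1}^{k} y_i\}$ for $k=0,1,\ldots,n$ (with $Y_0=0$ and $Y_n=1$) and form $\tilde{\y}:=(Y_1-Y_0,\ldots,Y_n-Y_{n-1})$; by design $\sum_{i\le k}\tilde{y}_i=Y_k$, so its partial sums already witness dominance over $\x$ and $\y$, but $\tilde{\y}$ need not lie in $\mathcal{P}^n$ because its components may fail to be non-increasing. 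I then apply the averaging step from the excerpt iteratively, once per remaining violation of monotonicity, to obtain $\mathcal{F}(\x\vee\y)$; each step preserves the total sum and only replaces a window $[i,j]$ of coordinates by their common mean, so the output has non-increasing components summing to one, hence lies in $\mathcal{P}^n$.

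For the first bullet, I would argue geometrically. The partial-sum curve $k\mapsto \sum_{\ell\le k}\mathcal{F}(\x\vee\y)_\ell$ coincides with $k\mapsto Y_k$ outside the averaged intervals, and on each averaged window $[i,j]$ it is the straight chord from $(i-1,Y_{i-1})$ to $(j,Y_j)$. Since $\x,\y\in\mathcal{P}^n$ are non-increasing, the graphs $k\mapsto\sum_{\ell\le k} x_\ell$ and $k\mapsto\sum_{\ell\le k} y_\ell$ are concave, so any chord whose endpoints sit on or above them remains on or above them at every interior index. Consequently $\sum_{\ell\le k}\mathcal{F}(\x\vee\y)_\ell\ge \sum_{\ell\le k} x_\ell$ and $\ge \sum_{\ell\le k} y_\ell$ for all $k$, i.e.\ $\x\prec\mathcal{F}(\x\vee\y)$ and $\y\prec\mathcal{F}(\x\vee\y)$.

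The second bullet is the crux of the proof. Given $\z\in\mathcal{P}^n$ with $\x\prec\z$ and $\y\prec\z$, I need $\sum_{\ell\le k}\mathcal{F}(\x\vee\y)_\ell\le \sum_{\ell\le k} z_\ell$ for every $k$. For indices outside the averaged intervals this reduces to $Y_k\le \sum_{\ell\le k} z_\ell$, which is immediate from $\x\prec\z$ and $\y\prec\z$. For $k\in[i,j]$ inside an averaged plateau, the left-hand side is the chord value $Y_{i-1}+(k-i+1)(Y_j-Y_{i-1})/(j-i+1)$. Here I use that $\z\in\mathcal{P}^n$ makes $k\mapsto \sum_{\ell\le k} z_\ell$ concave, and the endpoint inequalities $\sum_{\ell\le i-1} z_\ell\ge Y_{i-1}$ and $\sum_{\ell\le j} z_\ell\ge Y_j$ force the graph of $\z$'s partial sums to lie on or above that chord at every intermediate index, delivering the desired bound.

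The main obstacle I anticipate is the case in which the flatness process must be iterated several times, yielding nested plateaus; I would handle it by induction on the number of averaging steps, noting that after each step the new partial-sum curve remains concave so that the same chord-versus-concave-graph comparison can be reapplied. Uniqueness is then automatic: if $\mathbf{u}$ and $\mathbf{v}$ were two optimal upper bounds in $\mathcal{P}^n$, each would majorize the other, and antisymmetry of $\prec$ on vectors arranged in non-increasing order forces $\mathbf{u}=\mathbf{v}$.
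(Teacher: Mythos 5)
The paper itself does not prove Lemma~\ref{lem1}; it attributes it to Cicalese and Vaccaro \cite{Cicalese2002SM} and only proves the rescaled variant (Lemma~\ref{lem2}) by reduction to it, so your attempt reconstructs an argument the paper omits. Your handling of the second bullet and of uniqueness is sound: for $\z$ with $\x\prec\z$ and $\y\prec\z$ the partial sums of $\z$ dominate $Y_k=\max\{\sum_{\ell\le k}x_\ell,\sum_{\ell\le k}y_\ell\}$ at every index, and since the partial-sum graph of $\z\in\mathcal{P}^{n}$ is concave it lies above any chord whose endpoints it dominates; antisymmetry of majorization on non-increasingly ordered vectors then gives uniqueness.

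The genuine gap is in the first bullet. You assert that because the partial-sum graphs of $\x$ and $\y$ are concave, ``any chord whose endpoints sit on or above them remains on or above them at every interior index.'' That implication goes the wrong way: it is valid for convex graphs, not concave ones. Counterexample: $g(0)=0$, $g(1)=1$, $g(2)=1.5$ is concave, yet the chord from $(0,0)$ to $(2,1.5)$ takes the value $0.75<1=g(1)$ at the interior index. So endpoint domination plus concavity of the graph of $\x$ does not deliver $L(k)\ge\sum_{\ell\le k}x_\ell$ inside an averaged window. The correct route is to compare the chord with the pre-averaging curve $Y$ rather than with $\x$ and $\y$ directly: one must show that each averaging step never decreases a partial sum, i.e. $(k-i+1)a\ge\sum_{\ell=i}^{k}y_\ell$ for $i\le k\le j$, and this is precisely where the specific choice of $j$ (smallest ascent) and $i$ (greatest index with $y_{i-1}\ge a$) in Eq.~(\ref{eq; mj bound t}) is needed; combined with $Y_k\ge\sum_{\ell\le k}x_\ell$ and $Y_k\ge\sum_{\ell\le k}y_\ell$, which hold by construction of $Y_k$, this yields the first bullet. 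Relatedly, your induction hypothesis that ``after each step the new partial-sum curve remains concave'' cannot be right, since concavity of that curve is exactly the termination condition of the flatness process; the invariant you actually need is that each averaging step only raises the curve (for the first bullet) while keeping it below the concave partial-sum curve of every common upper bound $\z$ (for the second).
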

The domain of Lemma \ref{lem1} is the set $\mathcal{P}^{n}$. However, its generalization is also correct. Let us now consider the following set
\begin{align}
\mathcal{P}^{n}_{c} = \left\{\p=\left(p_{1}, \ldots, p_{n}\right)~|~
p_{j} \geqslant 0, \sum\limits_{j=1}^{n}p_{j}=c, p_{j}\geqslant p_{j+1}
\right\},
\end{align}
with a constant $c$. Then we have
\begin{lem}\label{lem2}
Let $\x$, $\y \in \mathcal{P}^{n}_{c}$, there exists a unique optimal upper bound $\mathcal{F} (\x \vee \y) \in \mathcal{P}^{n}_{c}$, satisfying
\begin{itemize}
\item $\x \prec \mathcal{F} (\x \vee \y)$, and $\y \prec \mathcal{F} (\x \vee \y)$;
\item for any $\z \in \mathcal{P}^{n}_{c}$ satisfying $\x \prec \z$ and $\y \prec \z$, it follows $\mathcal{F} (\x \vee \y) \prec \z$.
\end{itemize}
\end{lem}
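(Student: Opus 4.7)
The plan is to reduce Lemma~\ref{lem2} to Lemma~\ref{lem1} via a simple rescaling argument, exploiting the fact that both majorization and the flatness process are positively homogeneous. Concretely, I would set up a bijection between $\mathcal{P}^n_c$ and $\mathcal{P}^n$, transport the problem to $\mathcal{P}^n$, apply Lemma~\ref{lem1}, and then transport the result back.

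First, dispose of the trivial case $c=0$ (which forces $\x=\y=\bm{0}$) and assume $c>0$. Define the scaling map $\phi_c : \mathcal{P}^n_c \to \mathcal{P}^n$ by $\phi_c(\bm{v}) = \bm{v}/c$. This is clearly a bijection: the non-negativity, non-increasingness, and the sum constraint $\sum_i v_i = c$ correspond exactly to the defining properties of $\mathcal{P}^n$ after dividing by $c$. Moreover, $\phi_c$ is an isomorphism of posets under majorization, because the partial-sum inequalities $\sum_{j=1}^{k} u_j \leqslant \sum_{j=1}^{k} v_j$ are invariant under multiplication by the common positive factor $1/c$.

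Next, apply Lemma~\ref{lem1} to the images $\phi_c(\x),\phi_c(\y)\in\mathcal{P}^n$ to produce their unique optimal upper bound $\bm{w} := \mathcal{F}\bigl(\phi_c(\x)\vee\phi_c(\y)\bigr)\in\mathcal{P}^n$. Define
\begin{equation}
\mathcal{F}(\x\vee\y) \;:=\; c\,\bm{w} \;=\; \phi_c^{-1}(\bm{w}).
\end{equation}
The upper-bound properties $\x\prec c\bm{w}$ and $\y\prec c\bm{w}$ are immediate pullbacks of $\phi_c(\x)\prec \bm{w}$ and $\phi_c(\y)\prec\bm{w}$ through the order-isomorphism $\phi_c$. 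For optimality, if $\z\in\mathcal{P}^n_c$ is any other upper bound of $\{\x,\y\}$, then $\phi_c(\z)\in\mathcal{P}^n$ is an upper bound of $\{\phi_c(\x),\phi_c(\y)\}$, so Lemma~\ref{lem1} gives $\bm{w}\prec\phi_c(\z)$, which pulls back to $c\bm{w}\prec\z$. Uniqueness follows from antisymmetry of $\prec$.

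The only point demanding care is to confirm that $c\bm{w}$ actually lies in $\mathcal{P}^n_c$: non-negativity and non-increasingness transfer directly from $\bm{w}\in\mathcal{P}^n$, and the sum is $c\cdot 1 = c$, so this is automatic. I do not expect a substantive obstacle; the whole content of Lemma~\ref{lem2} is that $\mathcal{P}^n_c$ and $\mathcal{P}^n$ are isomorphic as majorization posets via $\phi_c$. Alternatively, if one wishes to avoid invoking Lemma~\ref{lem1} as a black box, one can instead redo Steps~1 and~2 of the flatness process directly in $\mathcal{P}^n_c$, observing that both the partial-sum maximization $Y_k=\max_{\x\in\mathcal{S}}\sum_{i\le k} x_i$ and the local averaging in Eq.~(\ref{eq; mj bound t}) are positively homogeneous and preserve the total $\sum_i v_i=c$, so that the argument of Cicalese and Vaccaro~\cite{Cicalese2002SM} carries over verbatim.
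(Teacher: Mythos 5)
Your proposal is correct and follows essentially the same route as the paper: rescale by $1/c$ to land in $\mathcal{P}^{n}$, invoke Lemma~\ref{lem1} there, and pull the bound and the optimality statement back by multiplying by $c$, using that majorization and the flatness process are positively homogeneous. The only cosmetic difference is that you \emph{define} $\mathcal{F}(\x\vee\y)$ as $c\,\bm{w}$, whereas the paper asserts the identity $c\,\mathcal{F}(\tfrac{1}{c}\x\vee\tfrac{1}{c}\y)=\mathcal{F}(\x\vee\y)$ from the scalar-multiplication-preserving property of $\mathcal{F}$; these amount to the same thing.
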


\begin{proof}
By dividing the constant $c$, we obtain
\begin{align}
\frac{1}{c}\x \in \mathcal{P}^{n}, \quad \frac{1}{c}\y \in \mathcal{P}^{n},
\end{align}
which implies
\begin{align}
\frac{1}{c}\x \prec \mathcal{F} ( \frac{1}{c}\x \vee \frac{1}{c}\y ), \quad \frac{1}{c}\y \prec \mathcal{F} ( \frac{1}{c}\x \vee \frac{1}{c}\y ).
\end{align}
Then for the positive vector $\x$ and $\y$ we have that
\begin{align}
\x \prec c \, \mathcal{F} ( \frac{1}{c}\x \vee \frac{1}{c}\y ), \quad
\y \prec c \, \mathcal{F} ( \frac{1}{c}\x \vee \frac{1}{c}\y ).
\end{align}
Now due to the fact that $\mathcal{F}$ is scalar-multiplication-preserving, we get $c \, \mathcal{F} ( \frac{1}{c}\x \vee \frac{1}{c}\y ) = \mathcal{F} (\x \vee \y)$, and hence
\begin{align}
\x \prec \mathcal{F} (\x \vee \y), \quad \y \prec \mathcal{F} (\x \vee \y).
\end{align}
For any $\z \in \mathcal{P}^{n}_{c}$ satisfying $\x \prec \z$ and $\y \prec \z$, we have
\begin{align}
\mathcal{F} ( \frac{1}{c}\x \vee \frac{1}{c}\y ) \prec \frac{1}{c} \z,
\end{align}
which immediately yields $\mathcal{F} (\x \vee \y) \prec \z$ and completes the proof.
\end{proof}
As a corollary of our Lemma \ref{lem2}, previous statement remains valid when the domain has been replaced by the set $S = \{\p=\left(p_{1}, \ldots, p_{n}\right)~|~
p_{j}\in[0,1], \sum_{j=1}^{n}p_{j}=c, p_{j}\geqslant p_{j+1}
\}$, and this proves the key lemma used in \cite{Li2019}. Here we only show the proof of two elements, but actually it works for any countable elements \cite{Bosyk2019SM}.

\section{Bounds for DPMUR}

Now we are in the position to construct the optimal for DPMUR. Note that the set of spatially-separated joint uncertainty $\p\otimes\q$ forms a subset of $\mathcal{P}^{n}$, i.e. here $\mathcal{S} = \{ \p\otimes\q \} \subset \mathcal{P}^{n}$. From Step 1, we have $Y_{k} = R_{k}$ with $R_{k}$ defined in the main text. For the collection of quantities $R_{k}$, we apply the flatness process and obtain $\mathcal{F} (\bm{r})$. Therefore, we have $\p\otimes\q \prec \mathcal{F} (\bm{r})$ for all probability vector $\p$ and $\q$, and $\mathcal{F} (\bm{r})$ is the largest element for $\mathcal{S} = \{ \p\otimes\q \}$, and hence optimal.

However, the vector $\bm{r}$ can be in general difficult to calculate explicitly, as they involve a complicated optimization problem. Fortunately, we still have the following relaxing method,
\begin{align}
R_{k} := \max\limits_{I_{k}}\max\limits_{\rho}\sum\limits_{(a,b)\in I_{k}}
\p_{a}(\rho)~\q_{b}(\rho)
\leqslant \max\limits_{I_{k_{1}}, I_{k_{2}}}
\max\limits_{\rho} \left( \sum\limits_{a\in I_{k_{1}}} \p_{a}(\rho) \right) \left( \sum\limits_{b\in I_{k_{2}}} \p_{b}(\rho) \right)
\leqslant  \max\limits_{I_{k_{1}}, I_{k_{2}}} \max\limits_{\rho}
\left( \frac{\sum_{a\in I_{k_{1}}} \p_{a}(\rho) + \sum_{b\in I_{k_{2}}} \p_{b}(\rho) }{2}\right)^{2},
\end{align}
with
\begin{align}
\max\limits_{\rho}
\left( \frac{\sum_{a\in I_{k_{1}}} \p_{a}(\rho) + \sum_{b\in I_{k_{2}}} \p_{b}(\rho) }{2}\right)^{2}
= \left( \frac{ \lambda_{1}( \sum_{a\in I_{k_{1}}} M_{a} + \sum_{b\in I_{k_{2}}} N_{b} ) }{2} \right)^{2},
\end{align}
and their indices $k_{1}$ and $k_{2}$ satisfying $k_{1} + k_{2} = k+1$. Let us define $T_{k}$ as
\begin{align}
T_{k} &:= \max\limits_{I_{k_{1}}, I_{k_{2}}} \left( \frac{ \lambda_{1}( \sum_{a\in I_{k_{1}}} M_{a} + \sum_{b\in I_{k_{2}}} N_{b} ) }{2} \right)^{2},\notag\\
t_{k} &:= T_{k} - T_{k-1}, \notag\\
\bm{t} &:= (t_{1}, \ldots, t_{mn}).
\end{align}
Note that here the vector $\bm{t}$ can be computed explicitly, satisfying the following inequalities
\begin{align}\label{UUR}
\p\otimes\q \prec \bm{r} \prec \bm{t}.
\end{align}
Eq. (\ref{UUR}) is the main result of \cite{PRL}, which is also the implementation of Step 1 presented in the previous section. In order to obtain a better bound, the flatness process $\mathcal{F}$ is needed; that is
\begin{align}\label{improvedUUR}
\p\otimes\q \prec \mathcal{F} ( \bm{r} ) \prec \bm{r} \prec \mathcal{F} ( \bm{t} ) \prec \bm{t}.
\end{align}
The proof of (\ref{improvedUUR}) follows \cite{Cicalese2002SM} straightforwardly.

\section{Bounds for DSMUR}

In the cases of (weighted) DSMUR, we have $\mathcal{S} = \{ \lambda\p\oplus (1-\lambda)\q \} \subset \mathcal{P}^{n}$. From Step 1, we have
\begin{align}
S_{k} &:= \max\limits_{|I|+|J|=k}\max\limits_{\rho}\sum\limits_{\substack{a\in I\\b\in J}} \left(\lambda\p_{a}(\rho)+(1-\lambda)\q_{b}(\rho)\right)
= \max\limits_{|I|+|J|=k} \max\limits_{\rho} \Tr \left[\rho \left( \sum\limits_{\substack{a\in I\\b\in J}} (\lambda M_{a} + (1-\lambda) N_{b} ) \right) \right]
= \max\limits_{|I|+|J|=k} \lambda_{1} \left( \sum\limits_{\substack{a\in I\\b\in J}} (\lambda M_{a} + (1-\lambda) N_{b} ) \right) \notag\\
&= \max\limits_{|I|=k}\sum\limits_{c\in I}
\lambda_{1}(G_{c}(\lambda)).
\end{align}
Unlike the cases of DPMUR, here the quantities $S_{k}$ can be computed explicitly. Based on these notations, we construct $\bm{s}(\lambda)$ as $(S_{1}(\lambda), S_{2}(\lambda)-S_{1}(\lambda), \ldots, S_{m+n}(\lambda)-S_{m+n-1}(\lambda))$, which meets the following relation
\begin{align}
\lambda \p \oplus (1-\lambda) \q \prec \bm{s}(\lambda).
\end{align}
Applying the flatness process, we immediately obtain
\begin{align}\label{FDSMUR}
\lambda \p \oplus (1-\lambda) \q \prec \mathcal{F} ( \bm{s}(\lambda) ) \prec \bm{s}(\lambda).
\end{align}
Again, the optimality of $\mathcal{F} ( \bm{s}(\lambda) )$ follows from the completeness of $\mathcal{P}^{n}$ and the flatness process $\mathcal{F}$ directly, not just because of the flatness process \cite{Li2019,Wang2019E}. For random number generator $R$ with uniform distribution, i.e. $\lambda = 1/2$, (\ref{FDSMUR}) implies that
\begin{align}
\frac{1}{2} \p \oplus \frac{1}{2} \q \prec \mathcal{F} ( \bm{s}(1/2) ) \prec \bm{s}(1/2),
\end{align}
and hence we have
\begin{align}
\p \oplus \q \prec 2 \mathcal{F} ( \bm{s}(1/2) ) \prec 2 \bm{s}(1/2).
\end{align}
Note that, the flatness process cannot be applied to the DSMUR $\p\oplus\q\prec2\bm{s}(1/2)$ directly \cite{Li2019,Wang2019E}, since the results presented in \cite{Cicalese2002SM} are only designed for the vector belongs to $\mathcal{P}^{n}$. Otherwise, an appropriate modification of the proof, i.e. our Lemma \ref{lem2}, is needed. This is another reason, from mathematical viewpoints, why our forms of DSMUR are valuable.

\section{Bounds for Multi-measurements MURs}

Uncertainty relation is not the patent of two measurements, so what to make of this? We checked in with a multi-measurements MURs to meake more sense of the ruling. First, we consider DPMUR with multi-measurements. Assume we have a set of POVMs $\{ M_{x} \}_{x=1}^{n}$ with $M_{x} = \{ M_{a|x} \}_{a=1}^{d}$, and the denote outcome probability distribution as $p(a(x)|x) := \Tr [\rho M_{a|x}]$. By collecting these numbers into the probability vectors, we have $p_{x} := (p(a(x)|x))_{a}$, and their spatially-separated joint uncertainty becomes $\bigotimes_{x} p_{x}$. In order to obtain a computing-friendly bound, we apply the Geometric-Arithmetic mean inequality, i.e.
\begin{align}
\max\limits_{I_{k}}\max\limits_{\rho}\sum\limits_{(a(x))_{x}\in I_{k}}
\prod\limits_{x} p(a(x)|x)
\leqslant \max\limits_{\sum_{x} I_{x} = k}\max\limits_{\rho}
\prod\limits_{x}
\sum\limits_{a(x) \in I_{x}} p(a(x)|x)
\leqslant \max\limits_{\sum_{x} I_{x} = k}\max\limits_{\rho}
\left( \frac{\sum_{x} \sum_{a(x) \in I_{x}} p(a(x)|x)}{n}\right)^{n}
= \max\limits_{\sum_{x} I_{x} = k}
\left( \frac{ \lambda_{1}(\sum_{x} \sum_{a(x) \in I_{x}} M_{a|x})}{n}\right)^{n}.
\end{align}
Similarly, define
\begin{align}
T_{k}^{\prime} &:= \max\limits_{\sum_{x} I_{x} = k}
\left( \frac{ \lambda_{1}(\sum_{x} \sum_{a(x) \in I_{x}} M_{a|x})}{n}\right)^{n},\notag\\
t_{k}^{\prime} &:= T_{k}^{\prime} - T_{k-1}^{\prime}, \notag\\
\bm{t}^{\prime} &:= (t_{1}^{\prime}, \ldots, t_{d^{n}}^{\prime}),
\end{align}
which satisfying the following multi-measurements DPMUR
\begin{align}
\bigotimes_{x} p_{x} \prec \bm{t}^{\prime}.
\end{align}
Moreover, by apply the flatness process $\mathcal{F}$ again, we obtain a tighter bound $\mathcal{F} (\bm{t}^{\prime})$; that is
\begin{align}
\bigotimes_{x} p_{x} \prec \mathcal{F} (\bm{t}^{\prime}) \prec \bm{t}^{\prime}.
\end{align}
Remark that the bound $\mathcal{F} (\bm{t}^{\prime})$ outperforms the one constructed in \cite{PRL}, i.e $\bm{t}$. On the other hand, for probability vectors $p_{x}$, we can also consider their joint uncertainties in temporally-separated forms, i.e. $\bigoplus_{x} c_{x} p_{x}$ with $\bm{c} := (c_{x})_{x}$ a probability vector. To find its bound, consider the following equation
\begin{align}
\max\limits_{\sum_{x} I_{x} = k}\max\limits_{\rho}
\sum\limits_{x}
\sum\limits_{a(x) \in I_{x}} c_{x} p(a(x)|x)
= \max\limits_{\sum_{x} I_{x} = k}
\lambda_{1} ( \sum\limits_{x}
\sum\limits_{a(x) \in I_{x}} c_{x} M_{a(x)|x} ) := S_{k}^{\prime},
\end{align}
and define $\bm{s}^{\prime}$ as $(S_{1}^{\prime}, S_{2}^{\prime}-S_{1}^{\prime}, \ldots, S_{nd}^{\prime}-S_{nd-1}^{\prime})$. Based on these notations, we have the following multi-measurements DSMUR
\begin{align}
\bigoplus_{x} c_{x} p_{x} \prec \bm{s}^{\prime}.
\end{align}
The optimal bounds of $\bigoplus_{x} c_{x} p_{x}$ is obtained as
\begin{align}
\bigoplus_{x} c_{x} p_{x} \prec \mathcal{F}(\bm{s}^{\prime}) \prec \bm{s}^{\prime},
\end{align}
by performing the flatness process $\mathcal{F}$. Therefore the construction of the optimal bound for $\bigoplus_{x} p_{x}$ is also straightforward.

\section{Mathematical Comparisons between DPMUR and DSMUR}

With the majorization relation for vectors, we now present DPMUR and DSMUR as
\begin{align}
\p\otimes\q & \prec \bm{t} := \x, \\
\p\oplus\q & \prec 2\bm{s}(1/2) := \y,
\end{align}
where $\rho$ runs over all quantum states in Hilbert space $\mathcal{H}$ with $\x$, $\y$ standing for the state-independent bound of DPMUR and DSMUR respectively. Let us take any nonnegative Schur-concave function $\mathcal{U}$ to quantify the uncertainties and apply it to DPMUR and DSMUR, which leads to
\begin{align}
\mathcal{U}(\p\otimes\q) &\geqslant \mathcal{U}( \x ),\\
\mathcal{U}(\p\oplus\q) &\geqslant \mathcal{U}( \y ).
\end{align}
The universality of MURs comes from the diversity of uncertainty measures $\mathcal{U}$ and DPMUR, DSMURs stand for different kind of uncertainties.

We next move to describe the additivity of uncertainty measures, and call a measure $\mathcal{U}$ {\it direct-product additive} if $\mathcal{U}(\p\otimes\q)=\mathcal{U}(\p) + \mathcal{U}(\q)$. Instead of direct-product between probability distribution vectors, one can also consider direct-sum and define {\it direct-sum additive} for $\mathcal{U}$ whenever it satisfies $\mathcal{U}(\p\oplus\q)=\mathcal{U}(\p) + \mathcal{U}(\q)$. Note that the joint uncertainty $\p\oplus\q$ considered here is unnormalized and comparison between DPMUR and normalized DSMUR is detailed later. Once an uncertainty measure $\mathcal{U}$ is evolved to both direct-product additive and direct-sum additive, then we call it {\it super additive} for uncertainties. It is worth to mention that $\mathcal{U}(\p\otimes\q)=\mathcal{U}(\p\oplus\q)$ whenever the uncertainty measure is super additive.
Consequently, the bound $\y$ for DSMUR performs better than $\x$ in the case of super additive,
\begin{align}\label{eqs}
\mathcal{U}(\p\otimes\q)=\mathcal{U}(\p\oplus\q) \geqslant \mathcal{U}( \y ) \geqslant \mathcal{U}( \x ),
\end{align}
since $\y \prec \left\{1\right\} \oplus \x$ \cite{PRA}. We remark that the well known Shannon entropy is super additive and only by applying super additive functions, like Shannon entropy, DPMUR and DSMUR are comparable. It should also be clear that DPMUR and DSMUR have been employed to describe different type of uncertainties. For an uncertainty measure $\mathcal{U}$, in general, it can be checked that $\mathcal{U}(\p\otimes\q) \neq \mathcal{U}(\p\oplus\q)$ and hence it is meaningless to state that DSMUR performs better than DPMUR and vice versa.

One of the main goals in the study of uncertainty relations is the quantification of the joint uncertainty of incompatible observables. DPMUR and DSMUR provide us two different methods to quantify joint uncertainty between incompatible observables. Relations between DPMUR and DSMUR are of fundamental importance both for the theoretical characterization of joint uncertainties, as well as the experimental implementation. Quite uncannily, we find that for some eligible uncertainty measure $\mathcal{U}$, DPMUR and DSMUR are given by
\begin{align}\label{eq1}
\mathcal{U}(\p\oplus\q) \geqslant \mathcal{U}( \y ) > \mathcal{U}(\p\otimes\q) \geqslant \mathcal{U}( \x ),
\end{align}
for some quantum state $\rho$.

Let us now construct such uncertainty measure $\mathcal{U}$. First define the summation function $\mathcal{S}$ as $\mathcal{S}(\mathbf{u}):=\sum_{l}u_{l}=\left\lVert \mathbf{u} \right\rVert_{1}$ with $\mathbf{u}=(u_{1}, u_{2}, \ldots, u_{d})$. Another important function $\mathcal{M}$ is defined as $\mathcal{M}(\mathbf{u}):=\max_{l}u_{l}=2^{-H_{\text{min}}(\mathbf{u})}$. And hence it is easy to check that $\mathcal{U}:=\mathcal{S}-\mathcal{M}$ is a nonnegative Schur-concave functions; take two vectors satisfying $x \prec y$, and based on the definition of $\mathcal{U}$ we have $\mathcal{U}(x)=\sum^{d}_{j=2}x_{j}^{\downarrow}\geqslant \sum^{d}_{j=2}y_{j}^{\downarrow}=\mathcal{U}(y)$. Specifically this function, which combines $\mathcal{S}$ and $\mathcal{M}$ together, is a qualified uncertainty measure and satisfies Eq. (\ref{eq1}) for some quantum states and measurements. Moreover, specific examples are given in the following experimental demonstration.

In principle, DPMUR and DSMUR do not have to be comparable and their joint uncertainty can be quantified by their bound. However, we can compare their differences by checking which bound approximates their joint uncertainty better since joint uncertainties are often classified by their bounds.
Take any nonnegative Schur-concave function $\mathcal{U}$, which leads to two nonnegative quantities $\xi_{DS}:=\mathcal{U}(\p\oplus\q) - \mathcal{U}( \y )$ and $\xi_{DP}:=\mathcal{U}(\p\otimes\q) - \mathcal{U}( \x )$. To determine whether the bound $\x$ approximates DPMUR better than $\y$ approximates DSMUR, we simply compare the numerical value of $\xi_{DS}$ and $\xi_{DP}$. And how such bounds contribute to the joint uncertainties are depicted in our experiment.

The above discussion on DSMUR is based on its unnormalized form $\p \oplus \q$, since it was first given in \cite{PRA} with the form $\p\oplus\q \prec \y$ for probability distributions $\p$ and $\q$. However, unlike $\p\otimes\q$ constructed in DPMUR \cite{PRL}, $\p\oplus\q$ is not even a probability distribution. In order to derive a normalized DSMUR, we simply take the weight $1/2$
\begin{align}
\frac{1}{2}\p\oplus\frac{1}{2}\q & \prec \frac{1}{2}\y.
\end{align}
And now we compare the normalized DSMUR $\frac{1}{2}\p\oplus\frac{1}{2}\q \prec \frac{1}{2}\y$ with DPMUR $\p\otimes\q \prec \x$; by taking the quantum states shown in the main text
\begin{equation}
\begin{aligned}
|\psi_{\theta,\phi}\rangle&=\cos\theta\sin\phi|0\rangle+\cos\theta\cos\phi|1\rangle+\sin\theta|2\rangle
\\&=(\cos\theta\sin\phi,\cos\theta\cos\phi,\sin\theta,0)^{\top},
\end{aligned}
\end{equation}
and measurements $A$, $B$ with the following eigenvectors
\begin{equation}
\begin{aligned}
&A=\left\{|0\rangle, |1\rangle, |2\rangle, |3\rangle\right\}\\
&B=\left\{\frac{|0\rangle-i|1\rangle-i|2\rangle+|3\rangle}{2}, \frac{|0\rangle-i|1\rangle+i|2\rangle-|3\rangle}{2}, \frac{|0\rangle+i|1\rangle-i|2\rangle-|3\rangle}{2},\frac{|0\rangle+i|1\rangle+i|2\rangle+|3\rangle}{2}\right\}.
\end{aligned}
\end{equation}
We depicted the pictures of $H\left(\p\otimes\q\right)$, $H\left(\frac{1}{2}\p\oplus\frac{1}{2}\q\right)$, $H\left(\x\right)$, and $H\left(\frac{1}{2}\y\right)$ in Fig.~\ref{shannontwo}.

\begin{figure*}[tbph]
\includegraphics [width=16cm,height=6.4cm]{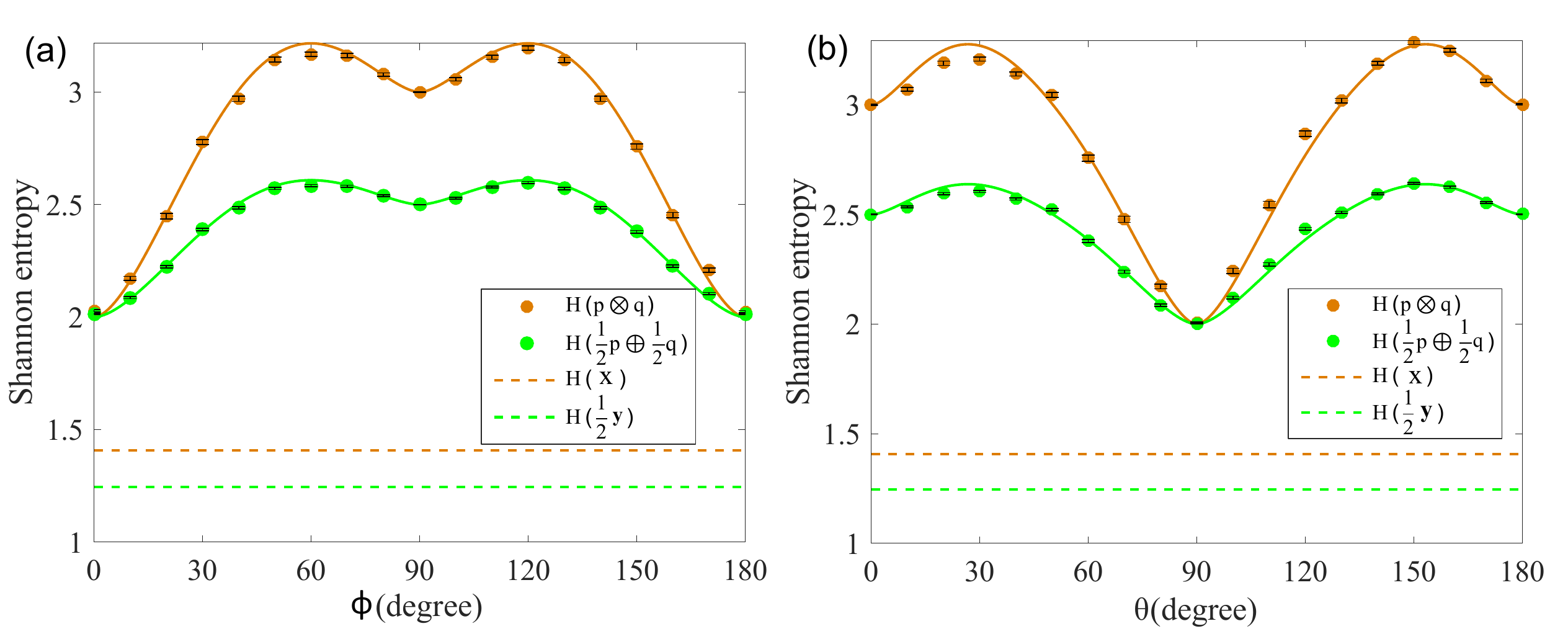}
\caption{Experimental results for the comparison between normalized DSMUR and DPMUR based on the Shannon entropy. Plots in (a) and (b) show the results for measuring states $|\psi_{\pi/4,\phi}\rangle$ and $|\psi_{\theta,\pi/4}\rangle$ with $A$ and $B$, respectively.}
\label{shannontwo}
\end{figure*}

\section{Super Additivity}

To be comparable for DPMUR and DSMUR, we should choose an uncertainty measure $\mathcal{U}$ that are both Schur-concave and super additive. Clearly Shannon entropy is a qualified candidate. The question, thus, naturally arises: is there another function satisfies the following properties:
\begin{description}[align=left]
\item [Property 1] $\mathcal{U}$ should be continuous in $\p$ and $\q$.
\item [Property 2] $\mathcal{U}$ should be a Schur-concave function.
\item [Property 3] $\mathcal{U}$ should be super additive, i.e.
	\begin{align}
        \mathcal{U}(\p\otimes\q)&=\mathcal{U}(\p) + \mathcal{U}(\q),\\
        \mathcal{U}(\p\oplus\q)&=\mathcal{U}(\p) + \mathcal{U}(\q).
        \end{align}
\end{description}
Or will these properties lead to a unique function (up to a scalar)? Since we can take $\q$ as $\left(1, 0, \ldots, 0\right)$, and then $\mathcal{U}(\p\otimes\q)= \mathcal{U}(\p)$ which is continuous in the $p_{i}$ while $\p=\left(p_{i}\right)_{i}$. Moreover, due to the Schur-concavity, $\mathcal{U}$ is a monotonic increasing function of $d$ when taking $p_{i}=\frac{1}{d}$. In addition, if $\mathcal{U}$ complies with the composition law for compound experiments, then there is only one possible expression for $\mathcal{U}$, i.e. Shannon entropy (up to a scalar). Namely, if there is a measure, say $\mathcal{U}(\p)=\mathcal{U}\left(p_{1}, p_{2}, \ldots, p_{d}\right)$ which is required to meet the following three properties:
\begin{description}[align=left]
\item [Property 4] $\mathcal{U}$ should be continuous in $\p$.
\item [Property 5] If all the $p_{i}$ are equal, $p_{i}=\frac{1}{d}$, then $\mathcal{U}$ should be a monotonic increasing function of $d$. With equally $d$ likely events there is more choice, or uncertainty, when there are more possible events.
\item [Property 6 (Composition Law)] If a choice be broken down into two successive choices, the original $\mathcal{U}$ should be the weighted sum of the individual values of $\mathcal{U}$.
\end{description}
Then the only $\mathcal{U}$ satisfying the three above assumptions is of the form \cite{Shannon1948}:
\begin{align}
\mathcal{U}(\p)=k\cdot\left(-\sum\limits_{i=1}^{d}p_{i}\log p_{i}\right),
\end{align}
where $k$ is a positive constant. Whenever a function $\mathcal{U}$ satisfies Property 1 and Property 2, it will meet Property 4 and Property 5 automatically. However, super additivity differs with the Composition Law, and this leads to function satisfied Property 1, 2, and 3 other than Shannon entropy.

For example, consider the composition between logarithmic function and elementary symmetric function:
\begin{align}
\mathcal{V}(\p):=\log\left(\prod\limits_{i=1}^{d}p_{i}\right).
\end{align}
Here $\mathcal{V}$ satisfies Properties 1, 2, and the DPMUR is read as
\begin{align}
\mathcal{V}(\p\otimes\q)&=\log\left(\prod\limits_{i, j}p_{i}q_{j}\right)\notag\\
&=\log\left(\prod\limits_{i}p_{i}\cdot\prod\limits_{j}q_{j}\right)\notag\\
&=\log\left(\prod\limits_{i}p_{i}\right)+\log\left(\prod\limits_{j}q_{j}\right)\notag\\
&=\mathcal{V}(\p)+\mathcal{V}(\q),
\end{align}
where the probability distributions $\p$ and $\q$ are defined as $\left(p_{i}\right)_{i}$ and $\left(q_{j}\right)_{j}$. On the other hand, DSMUR is written as
\begin{align}
\mathcal{V}(\p\oplus\q)&=\log\left(\prod\limits_{i, j}p_{i}q_{j}\right)=\mathcal{V}(\p\otimes\q),
\end{align}
hence, $\mathcal{V}$ meets Property 3. To summarize, we derive a function $\mathcal{V}$, which is valid for Properties 1, 2, and 3. However $\mathcal{V}$ is not a good uncertainty measure, since $\mathcal{V}(\y)$ and $\mathcal{V}(\x)$ are not well defined (due to the occurrence of $\log0$). Whether there exists another function that obeys Properties 1, 2, and 3 remains an open question, and one may conjecture that $\mathcal{V}$, Shannon entropy $H$ and the convex combinations of $\mathcal{V}$ and $H$ are the only suitable candidates.

\end{document}